\newtheorem{lemma}{Lemma}[section]
\newtheorem{definition}{Definition}[section]
\newtheorem{theorem}{Theorem}[section]
\newtheorem{corollary}{Corollary}[section]
\newcommand{\QBCb}{PivotBiCluster\xspace}
\newcommand{\BCC}{BCC\xspace}
\newcommand{\CC}{CC\xspace}
\newcommand{\E}[1]{\mathbb{E}\left[ #1 \right]}
\newcommand{\bT}{\bar{T}}
\DeclareMathOperator{\cost}{cost}
\def\E{\mathbb{E}}
\title{An Improved Algorithm for Bipartite Correlation Clustering} %
\author{Nir Ailon\thanks{Technion Israel Institute of Technology.} \;
            Noa Avigdor-Elgrabli\thanks{Technion Israel Institute of Technology and Yahoo! Research.} \;\;%
	    Edo Liberty\thanks{Yahoo! Research.} \;%
           }
\date{\nonumber}
\begin{document} %
\begin{spacing}{1.135}

\maketitle %
\setcounter{page}{0}
\thispagestyle{empty}
\begin{abstract}
Bipartite Correlation clustering is the problem of generating a set of disjoint bi-cliques on a set
of nodes while minimizing the symmetric difference to a bipartite input graph.
The number or size of the output clusters is not constrained in any way.

The best known approximation algorithm for this problem gives a factor of $11$.\footnote{A previously claimed $4$-
approximation algorithm \cite{GHKZ08} is erroneous, as we show in the appendix.	}
This result and all previous ones involve solving large linear or semi-definite programs
which become prohibitive even for modestly sized tasks.
In this paper we present an improved factor $4$ approximation algorithm to this problem using a simple combinatorial
algorithm which does not require solving large convex programs.

The analysis extends a method developed by Ailon, Charikar and Alantha in 2008, where a randomized pivoting
algorithm was analyzed for obtaining a $3$-approximation algorithm for Correlation Clustering, which is
the same problem on graphs (not bipartite).  The analysis for Correlation Clustering there required
defining events for  structures containing $3$ vertices and using the probability of these events to produce a feasible
solution to a dual of a certain natural LP bounding the optimal cost.

It is tempting here to use sets of $4$ vertices, which are the smallest structures for which contradictions arise for Bipartite Correlation Clustering.
This simple idea, however, appears to be evasive.
We show that, by modifying the LP, we can analyze algorithms which take into consideration subgraph structures of unbounded size.
We believe our techniques are interesting in their own right, and may be used for other problems as well.
\end{abstract}
\newpage

\section{Introduction}

Bipartite Correlation Clustering (\BCC) is a problem in which the input is a  bipartite graph and the output is a
set of disjoint clusters covering the graph nodes.\footnote{Here we consider the unweighted case, although a weighted version can be easily obtained from our analysis.}
A cluster may contain nodes from either side of the graph, but it may
also contain nodes from only one side.
We think of a cluster as a bi-clique connecting all the elements from its left and right counterparts.
An output clustering is hence a union of bi-cliques covering the input node set.
The cost of the solution is the symmetric difference between the input and the output.  Equivalently, any pair of vertices, one on the left and one of the right,
will incur a unit cost if either (1) an edge connects  them but the output clustering separates them in distinct clusters, or (2) no edge connects them but the
output clustering puts them in the same cluster.
The objective is to minimize this cost.

This notion of clustering is natural when the number of clusters and their size
are not known, and the graph relations are bipartite by nature.
It was studied in context of molecular biology, specifically, in gene expression data analysis (for example \cite{Madeira04,Cheng2000}).
Other examples for bipartite data abound.
In collaborative filtering and recommender systems interactions are given between users and items \cite{Symeonidis06}, for example, raters vs. movies/songs.
Other examples may include images vs. user generated tags and search engine queries vs. search results.

\BCC is a bipartite version of the more well known
Correlation Clustering (\CC), introduced by Bansal, Blum and Chawla \cite{BBC04},
where the objective is to cover an input set of nodes with
disjoint cliques (clusters) minimizing the symmetric difference with a given edge set over these nodes.
One motivation for \BCC, which also applies to our setting, is a \emph{2-stage} clustering
approach in which one (i) applies binary classification machine-learning methods to predict pairs of nodes that should be
clustered together, and (ii) uses the learned classifier, applied to all pairs, as input to \BCC.
Assuming  there is a correct clustering of the data and that the above binary classifier has some bounded error rate with respect to that ground truth,
we can recover, using an algorithm for \CC (or, \BCC in our bipartite case) a clustering of the data which is provably close to the true clustering (see \cite{AL09clustering}). 

Another motivation is the alleviation of the need to specify the number of output clusters, as
often needed in clustering settings such as $k$-means or $k$-median.  The treatment of clustering problems
as \CC or \BCC  should  be compared to their  predating  (by decades) statistical theory of record linkage  where, in a typical application,
one wishes to identify duplicate records in a database riddled with human errors.
The number of clusters is clearly unknown.  In fact, the original  record linkage literature \cite{RecordLinkage} considered the bipartite
case, a typical example being two government agencies cross-validating large databases of population information.

Bansal et. al \cite{BBC04}  gave a $c \approx 10^4$
factor for approximating \CC running in time $O(n^2)$ where $n$ is the number of nodes in the graph.
Later, Demaine et. al \cite{DEFI06} gave a $O(\log(n))$ approximation algorithm for
an \emph{incomplete} version of \CC, relying on solving an LP and rounding its solution by employing a region growing procedure.
By incomplete we mean that only a subset of the node pairs participate in the symmetric difference cost
calculation.\footnote{In some of the literature, \CC refers to the much harder incomplete version,
and ``\CC in complete graphs'' is used for the version we have described here.}
\BCC is, in fact,  a special case of incomplete \CC, in which the non-participating node pairs lie on the same side of the graph.
Charikar et. al \cite{CharikarGW05} provide a $4$-approximation algorithm for \CC, and another $O(\log n)$-approximation algorithm for the incomplete case.
Later, Ailon et. al \cite{ACN08} provided a $2.5$-approximation algorithm for \CC based on rounding an LP.
They also provide a simpler $3$-approximation algorithm, QuickCluster, which runs in time linear in the number of edges of the graph.
In \cite{AL09clustering} it was argued that QuickCluster runs in  expected time  $O(n + cost(OPT))$.

 Van Zuylen et. al \cite{ZHJW07} provided de-randomization
for the algorithms presented in \cite{ACN08} with no compromise in the approximation guarantees.  Mathieu and Schudy in \cite{MS10clustering}
considered the \emph{planted graph} version, in which the input is a  noisy version of a  union-of-cliques graph, and show that
a PTAS is possible for this setting.
Also, Giotis et. al \cite{Giotis:2006:CCF:1109557.1109686} and independently using other techniques, Karpinski et. al \cite{KS08} gave a PTAS for the  \CC case in which the number of clusters is constant.

Amit \cite{NogaAmit04} was the first to address \BCC directly. She proved its NP-hardness and gave a constant $11$-approximation algorithm based
on rounding a linear programming in the spirit of Charikar et. al's \cite{CharikarGW05} algorithm for \CC.

It is worth noting that in \cite{GHKZ08} a $4$-approximation algorithm for \BCC was presented and analyzed.
The presented algorithm is incorrect (we give a counter example in the paper) but their attempt
to use arguments from \cite{ACN08} is an excellent one. We will show that an extension of the method
in \cite{ACN08} is needed.

\subsection{Our Results}

Our main result,  requiring a considerable development of previous techniques, is a randomized expected $4$-approximation algorithm, \QBCb.

To explain how we attain it, we recall the method of Ailon et. al \cite{ACN08}.  The algorithm for \CC presented there is as follows
(we concentrate on the unweighted case).  Choose a random vertex, and form a cluster with its neighbors.  Remove the cluster from the graph,
and repeat until the graph is empty.  This  random-greedy algorithm returns a solution with cost at most $3$-times that of the optimal solution, on expectation.
The analysis was done by noticing that each cost element is naturally related to a \emph{contradiction structure} containing
$3$ vertices and exactly $2$ edges between them.  This structure is, incidentally, the minimal structure  forcing any solution to pay.
In other words, the locations in which any clustering errs must \emph{hit} the set of contradicting structures.
A corresponding hitting set LP lower bounding the optimal solution was defined to capture this simple observation, and a feasible solution
was then conveniently assigned to its dual using probabilities arising in the algorithm probability space.

It is tempting here to consider the corresponding  minimal contradiction structure for \BCC, namely a set of $4$ vertices, $2$ on each side,
with exactly $3$ edges between them.  Unfortunately, this idea turned out to be evasive (a proposed solution attempting this \cite{GHKZ08} has a counter example
which we describe and analyze in Appendix~\ref{section:counter} and
is hence incorrect).  In our analysis we resorted to
contradiction structures of unbounded size.  Such a structure consists of two vertices $\ell_1, \ell_2$ of the left side and
two sets  of vertices $N_1,N_2$ on the right hand side
such that $N_i$ is contained in the neighborhood of $\ell_i$ for $i=1,2$, $N_1\cap N_2 \neq \emptyset$ and $N_1 \neq N_2$.
We define a hitting LP as we did earlier, this time of possibly exponential size, and analyze its dual in tandem with a carefully
constructed random-greedy algorithm.  As this analysis sketch suggests, the algorithm is not symmetrical with respect to the right and left side
of the input.  Indeed, at each round it chooses a random pivot vertex on the left, constructs a cluster with its right hand side neighbors, and then for each other vertex
on the left hand side makes a randomized decision whether to join the new cluster based on the intersection pattern of  its neighborhood  with the pivot's neighborhood.

\subsection{Paper Structure}
We start with basic notation in Section~\ref{sec:notation}.
We then present our main algorithm in Section~\ref{sec: the algorithm}, followed by its analysis in Section~\ref{sec:analysis}.
We discuss future work in Section~\ref{sec:future}.

\section{Notation}\label{sec:notation}
Before describing the framework we give some general facts and notations.
Let the input graph be $G=(L,R,E)$ where $L$ and $R$ are the sets of left and right nodes and $E$ be
a subset of $L\times R$. Each element $(\ell,r) \in L\times R$ will be referred to as a \emph{pair}.

A solution to our combinatorial problem is a clustering $C_1, C_2, \dots, C_m$ of the set $L\cup R$.  We  identify such a clustering with a bipartite graph
$B=(L,R,E_{B})$  for which $(\ell,r) \in E_B$ if and only if $\ell \in L$ and $r\in R$ are  in the same cluster $C_i$ for some $i$.  Note that given $B$, we
are unable to identify clusters contained exclusively in  $L$ (or $R$), but this will not affect the cost, so we adopt the convention that
single-side clusters are always singletons.

We will say that a pair $e = (\ell,r)$ is erroneous if $e \in (E \setminus E_B)\cup(E_B \setminus E)$.
For convenience, let $x_{G,B}$ be the indicator function for the erroneous pair set, i.e., $x_{G,B}(e) =1$ if $e$ is erroneous and $0$ otherwise.
We will also simply use $x(e)$ when it is obvious to which graph $G$ and clustering $B$ it refers.
The cost of a clustering solution is defined to be $\cost_{G}(B) = \sum_{e \in L\times R} x_{G,B}(e)$.
Similarly, we will use $\cost(B) = \sum_{e \in L\times R} x(e)$ when $G$ is clear from the context,
Let $N(\ell) = \{r | (\ell,r) \in E\}$ be the set of all right nodes adjacent to $\ell$.

It will be convenient for what follows to define a \emph{tuple}.
We define a tuple $T$ to be $T= (\ell^T_1,\ell^T_2,R^T_{1},R^T_{1,2},R^T_2)$
where $\ell^T_1, \ell^T_2 \in L$, $\ell^T_1 \neq \ell^T_2$, $R^T_{1} \subseteq N(\ell^T_1)\setminus N(\ell^T_2)$,  $R^T_{2} \subseteq N(\ell^T_2)\setminus N(\ell^T_1)$
and $R^T_{1,2} \subseteq N(\ell^T_2)\cap N(\ell^T_1)$.  In what follows, we may omit the superscript of $T$.
Given a tuple $T=(\ell^T_1,\ell^T_2,R^T_{1},R^T_{1,2},R^T_2)$, we define the \emph{conjugate tuple}
$\bar{T}  =  (\ell^{\bar{T}}_1,\ell^{\bar{T}}_2,R^{\bar{T}}_{1},R^{\bar{T}}_{1,2},R^{\bar{T}}_2) = (\ell^T_2,\ell^T_1,R^T_{2},R^T_{1,2},R^T_1)$.
Note that $\bar{\bar T} = T$.

\section{The Algorithm}\label{sec: the algorithm}

We now describe our algorithm \QBCb.  The algorithm is
sequential. In every cycle it creates one cluster and possibly
many singletons, all of which are removed from the graph before continuing to the next iteration.  
Abusing notation, by $N(\ell)$ we mean, in the algorithm's description, all the neighbors of $\ell\in L$ which have not yet
been removed from the graph.

Every such cycle performs two phases.
In the first phase, \QBCb picks a node on the left side
uniformly at random, $\ell_1$, and forms a new cluster $C=\{\ell_1\} \cup N(\ell_1)$.
This will be referred to as the $\ell_1$-phase and $\ell_1$ will be referred to as the left center of the cluster.
In the second phase, denoted as the  $\ell_2$-sub-phase corresponding to the $\ell_1$-phase, the algorithm iterates over all other remaining left nodes, $\ell_2$,
and decides either to (1) append them to  $C$, (2) turn them into singletons, or (3) do nothing.
We now explain how to make this decision. let $R_1= N(\ell_1) \setminus  N(\ell_2)$, $R_2= N(\ell_2) \setminus  N(\ell_1)$ and
$R_{1,2}= N(\ell_1)\cap N(\ell_2)$.
With probability $\min\{\frac{|R_{1,2}|}{|R_{2}|} , 1\}$ do one of two things:
(1) If $|R_{1,2}| \ge |R_1|$ append $\ell_2$ to $C$, and otherwise
(2) (if $|R_{1,2}| < |R_1|$), turn $\ell_2$ into a singleton.
In the remaining probability, (3) do nothing for $\ell_2$, leaving it in the graph for future iterations.
Examples for cases the algorithm encounters for different ratios of $R_1$, $R_{1,2}$, and $R_2$ are given in Figure~\ref{fig:algorithm}.

\begin{figure}%
\label{fig:algorithm}
\begin{center}%
    \includegraphics[width=17cm]{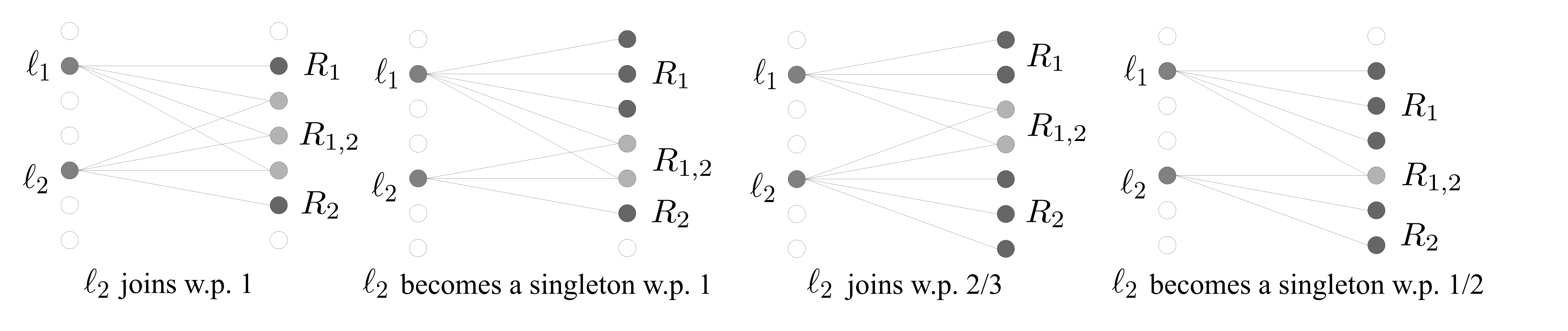}
\end{center}%
\caption{Four example cases in which $\ell_2$ either joins the cluster created by $\ell_1$ or becomes a singleton.
In the two right most examples, with the remaining probability nothing is decided about $\ell_2$.}
\end{figure}%

\begin{theorem}\label{thm:main}
 Algorithm \QBCb returns a solution with expected cost at most $4$ that of the optimal solution.
\end{theorem}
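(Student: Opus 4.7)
The plan is to extend the primal-dual analysis of \cite{ACN08} (where bad triangles indexed a hitting LP) to a tuple-indexed LP whose constraints range over the unbounded-size structures $T = (\ell_1, \ell_2, R_1, R_{1,2}, R_2)$ defined in Section~\ref{sec:notation}. The first step is to observe that any clustering $B$ must err on at least $b_T := \min(|R_1^T| + |R_2^T|,\, |R_{1,2}^T|)$ pairs from $\{\ell_1^T, \ell_2^T\} \times (R_1^T \cup R_2^T \cup R_{1,2}^T)$: if $\ell_1$ and $\ell_2$ lie in the same cluster of $B$ then each $r \in R_1 \cup R_2$ generates an erroneous pair with one of $\ell_1, \ell_2$, and if they lie in different clusters then each $r \in R_{1,2}$ does. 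This yields an LP with variables $x_e$, objective $\min \sum_e x_e$, one constraint per tuple, whose optimum lower-bounds $\cost(\mathrm{OPT})$. Passing to the dual gives a variable $y_T \geq 0$ per tuple, objective $\max \sum_T b_T y_T$, and per-pair feasibility constraint $\sum_{T \ni e} y_T \leq 1$.

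The second step is to exhibit dual variables using \QBCb's own randomness. For each ordered pair of distinct left vertices and each execution reaching the iteration in which the first is the pivot and the second the currently processed left node, form the tuple $T$ whose sets $R_1, R_2, R_{1,2}$ are defined with respect to the \emph{surviving} graph at that moment, and set $y_T$ proportional to the probability that this configuration arises, with a normalization that distributes weight between $T$ and its conjugate $\bT$ depending on which branch ($|R_{1,2}| \geq |R_1|$ or $|R_{1,2}| < |R_1|$) is taken. I would then decompose the expected cost of \QBCb by attributing to each such iteration the errors it freshly creates: appending $\ell_2$ to the new cluster yields roughly $|R_1| + |R_2|$ errors; making $\ell_2$ a singleton yields roughly $|R_{1,2}| + |R_2|$ errors; and each is scaled by the acceptance probability $\min\{|R_{1,2}|/|R_2|, 1\}$. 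The bookkeeping target is that these quantities are at most $4 b_T$ times the probability placed on $T$, so that $\E[\cost(B_{\text{alg}})] \leq 4 \sum_T b_T y_T \leq 4 \cdot \cost(\mathrm{OPT})$.

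I expect the main obstacle to lie in verifying dual feasibility, $\sum_{T \ni e} y_T \leq 1$ for every pair $e = (\ell, r)$. Since a single pair participates in exponentially many tuples the argument cannot be enumerative; instead one uses the structural fact that each tuple containing $e$ is activated at a unique iteration of \QBCb, determined by the fate of $\ell$ (selected as pivot, absorbed into a cluster as a non-pivot, or made into a singleton). Across iterations these events are mutually exclusive, and within any single iteration the contributions to $\sum_{T \ni e} y_T$ must be controlled via the $\min\{|R_{1,2}|/|R_2|, 1\}$ acceptance rule, which I expect to be precisely calibrated so that the probability mass falling on $T$ or $\bT$ (depending on the executed branch) never exceeds $1$ in total. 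Forcing the factor-$4$ bound and the feasibility constraint to hold \emph{simultaneously} is what dictates the exact thresholds in the algorithm, and this joint tuning --- together with the use of conjugate tuples $\bT$ to absorb the asymmetry between the pivot $\ell_1$ and the non-pivot $\ell_2$ --- is where I expect the technical heavy lifting to concentrate.
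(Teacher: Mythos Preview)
Your high-level strategy --- a primal--dual argument over tuple-indexed constraints, with dual weights coming from the algorithm's own event probabilities and the conjugation $T\leftrightarrow\bar T$ used to symmetrize --- is exactly the scheme the paper follows. The difference is in \emph{which} LP you write down, and that difference matters.

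You propose the constraint $\sum_{e\in\{\ell_1,\ell_2\}\times(R_1\cup R_{1,2}\cup R_2)} x_e \ge b_T$ with $b_T=\min(|R_1|+|R_2|,|R_{1,2}|)$, obtained from the ``same cluster vs.\ different clusters'' dichotomy. The paper does \emph{not} use this; its tuple constraint is an aggregation of \emph{bad-square} inequalities and therefore involves only the pairs in $\{\ell_1,\ell_2\}\times(R_2\cup R_{1,2})$, with the fractional coefficients $1/|R_2^T|$ and $1/|R_{1,2}^T|$ and right-hand side $1$. Those coefficients are the crux of dual feasibility: with $\beta(T)=\alpha_T\,q_T\min\{|R_{1,2}^T|,|R_2^T|\}$, each summand $\beta(T)/|R_2^T|$ or $\beta(T)/|R_{1,2}^T|$ is bounded by $\Pr[X_{e,T}]$, the probability that the pair $e$ is \emph{colored} by $T$. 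The coloring lemma then says the events $\{X_{e,T}\}_T$ are disjoint, so the total is at most~$1$.

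Your LP loses exactly this handle. Its dual constraint is $\sum_{T\ni e} y_T\le 1$ with unit coefficients, but a fixed pair $e=(\ell,r)$ belongs to one tuple for \emph{every} choice of the partner left vertex $\ell'$, and (for $r\in R_2^T$) can recur across many iterations in which $\ell$ is processed as an $\ell_2$ but ``do nothing'' is chosen. Setting $y_T$ proportional to $q_T$ therefore fails feasibility outright: $\sum_{T\ni e} q_T$ is not a probability of disjoint events and can be large. Shrinking $y_T$ by the acceptance probability $\min\{|R_{1,2}|/|R_2|,1\}$ fixes only one of the four sums in the constraint and leaves the others (indexed by all partners $\ell'$) uncontrolled. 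The paper's square-derived constraint, by excluding $R_1^T$ from $T$'s constraint and weighting by $1/|R_2^T|,1/|R_{1,2}^T|$, is precisely what converts every dual summand into a coloring probability. You should switch to that LP; once you do, the remaining work is the explicit choice of $\alpha_T$ and a three-case comparison of $\E[\cost(T)\mid X_T]+\E[\cost(\bar T)\mid X_{\bar T}]$ against $4(\beta(T)+\beta(\bar T))$, split according to the ordering of $|R_1|,|R_{1,2}|,|R_2|$. (Minor: your cost bookkeeping omits the ``do nothing'' branch, which still severs the $|R_{1,2}|$ edges from $\ell_2$ to $R_{1,2}$; this term is needed in the case analysis.)
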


\section{Algorithm Analysis}\label{sec:analysis}
We start by describing {\em bad events}.
This will help us relate the expected cost of the algorithm to a sum of event probabilities and expected consequent costs.
\begin{definition}
We say that a {\em bad event}, $X_T$, happens to the tuple
$T = (\ell_1^T, \ell_2^T, R_1^T, R_{1,2}^T, R_{2}^T)$ if during the execution of \QBCb, $\ell^T_1$ was chosen
to be a left center while $\ell^T_2$
was still in the graph, and \emph{at that moment}, $R^T_1= N(\ell^T_1)\setminus N(\ell^T_2)$,
$R^T_{1,2}= N(\ell^T_1)\cap N(\ell^T_2)$, and
$R^T_{2}= N(\ell^T_2)\setminus N(\ell^T_1)$.  (We refer by $N(\cdot)$ here to the neighborhood function in a particular
moment of the algorithm execution.)
\end{definition}
If a bad event $X_T$ happens
to tuple $T$ we ``color'' the following pairs with color $T$ :
\begin{itemize}
\item   $\{(\ell^T_2,r_{1})\ : \ r_{1}\in R^T_{1}\}$,
\item   $\{(\ell^T_2,r_{1,2})\ :\  r_{1,2}\in R^T_{1,2}\}$,
\item  $\{(\ell^T_2,r_2)\ :\  r_2\in R^T_2\}$ only if we decide
        to associate $\ell^T_2$ to $\ell^T_1$'s cluster, or if we decide to make
        $\ell^T_2$ a singleton during the $\ell_2$-sub-phase corresponding to the $\ell_1$-phase.
\end{itemize}

\begin{lemma}\label{lm: mapping errors to tuples}
During the execution of \QBCb each pair $(\ell,r) \in L\times R$
is colored at most once, and each pair on which the output errs is colored exactly once.
\end{lemma}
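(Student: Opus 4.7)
My plan is to split the lemma into two claims: (a) every pair in $L\times R$ is colored at most once, and (b) every pair on which the output errs is colored at least once. Claim (b) together with (a) gives the ``exactly once'' conclusion for errors.

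For (a), I will argue that whenever a tuple $T$ colors a pair $(\ell,r)$ (necessarily with $\ell = \ell_2^T$), at least one of $\ell, r$ is removed from the graph in the same iteration, so the pair cannot feature in any future tuple's neighborhood sets. Specifically, if $r \in R_1^T \cup R_{1,2}^T$ then $r \in N(\ell_1^T)$ and $r$ is swept into the new cluster $C$; if $r \in R_2^T$, the coloring rule only fires when $\ell$ is appended to $C$ or turned into a singleton, and either outcome removes $\ell$. Since every subsequent tuple that could color $(\ell,r)$ would require both endpoints to still be in the graph, uniqueness follows.

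For (b), I will case-split on the type of error. If $(\ell,r)\notin E$ but $\ell$ and $r$ share a cluster $C$ in the output, the pivot $\ell_1$ of the phase that formed $C$ cannot be $\ell$ (otherwise $r \in N(\ell_1) = N(\ell)$ would force $(\ell,r)\in E$), so $\ell$ was appended as $\ell_2$ in the sub-phase, and at that moment $r \in N(\ell_1)\setminus N(\ell) = R_1^T$, so the rule colors $(\ell,r)$. If $(\ell,r)\in E$ but they end up in different clusters, I consider the first iteration in which one of them is removed. When $r$ is removed first in a phase with pivot $\ell_1\neq \ell$, the tuple $T$ with $\ell_2^T = \ell$ has $r \in N(\ell_1)\cap N(\ell) = R_{1,2}^T$, producing a coloring. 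When $\ell$ is removed first, $\ell$ cannot be the pivot (else $r\in N(\ell)$ would be dragged into $\ell$'s cluster), so $\ell$ is processed as an $\ell_2$; because $r$ is still present we must have $r \notin N(\ell_1)$ (else $r$ would already have been pulled into $C$), giving $r \in R_2^T$, and the $R_2$ rule fires precisely because $\ell$ joins $C$ or becomes a singleton.

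The main obstacle I anticipate is making the within-iteration temporal ordering rigorous, since claims like ``$r$ is still in the graph when $\ell$ is processed'' need to be combined with ``the pivot step has already completed'' to conclude $r \notin N(\ell_1)$. I will pin this down once and for all by exploiting that the $\ell_1$-phase strictly precedes its $\ell_2$-sub-phase, after which every case above reduces to a mechanical check against the coloring rules stated before the lemma.
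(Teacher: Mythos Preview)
Your proof is correct and follows essentially the same strategy as the paper. Part~(a) is identical to the paper's argument. For part~(b) the paper argues the contrapositive in two sentences (uncolored pairs involve a left pivot and a right node still present, and the algorithm never errs on such pairs), whereas you give the direct implication via an explicit case split on the type of error and on which endpoint leaves the graph first; your version is more detailed but logically the same route.
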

\begin{proof}
For the first part, we  show that pairs are colored at most once.
A pair $(\ell,r)$ can only be colored during an $\ell_2$-sub-phases with respect to some $\ell_1$-phase, if $\ell=\ell_2$.
Clearly, this will only happen in one $\ell_1$-phase, as
every time a pair is colored either $\ell_2$ or $r$ (or both) are removed from the graph.
Indeed, either $r \in R_1 \cup R_{1,2}$ in which case $r$ is removed, or $r \in R_2$,  but  then $\ell$ is
removed since it either joins the cluster created by $\ell_1$ or becomes a singleton.

For the second part, note that the only pairs which are not colored are between left centers
(during $\ell_1$-phases) and right nodes in the graph at that time.
On all these  pairs the algorithm does not err.
\end{proof}

We denote by $q_T$ the probability that event $X_T$ occurs and by
$\cost(T)$ the number of erroneous pairs
that are colored by $X_T$.
From Lemma~\ref{lm: mapping errors to tuples}
we get the following:

\begin{corollary}\label{obs: mapping QBCb to tuples}

$$
\mathbb{E}[\cost[\QBCb]] = \mathbb{E}\left[\sum_{e \in L\times R}  x(e) \right] = \mathbb{E}\left[\sum_T \cost(T) \right]
=\sum_T {q_T \cdot \mathbb{E}[\cost(T) | X_T]}\ .
$$
\end{corollary}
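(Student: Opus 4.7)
The plan is to chain the three equalities separately, since each relies on a different elementary observation. The first equality is immediate from the definition: $\cost(\QBCb) = \sum_{e \in L\times R} x(e)$ where $x$ is the indicator of erroneous pairs with respect to the output clustering, so I would just invoke the definition from Section~\ref{sec:notation} and take expectation on both sides.

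For the second equality, I would appeal directly to Lemma~\ref{lm: mapping errors to tuples}. That lemma gives an injection from erroneous pairs to tuples via the coloring: each pair on which the output errs receives exactly one color $T$, and no other pair colored $T$ is erroneous (pairs colored $T$ are the ones listed in the three bullet points, and $\cost(T)$ is defined as the number of those that are actually erroneous). Hence $\sum_T \cost(T)$, as a random variable over the execution, counts each erroneous pair exactly once, so $\sum_{e} x(e) = \sum_T \cost(T)$ pointwise on the sample space; taking expectations gives the second equality.

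For the third equality, I would use the simple fact that $\cost(T)$ is zero unless the bad event $X_T$ occurs, since coloring by $T$ only happens when $X_T$ happens. Thus $\E[\cost(T)] = \E[\cost(T) \cdot \mathbf{1}[X_T]] = \Pr[X_T] \cdot \E[\cost(T) \mid X_T] = q_T \cdot \E[\cost(T) \mid X_T]$. Summing over $T$ and applying linearity of expectation (swap of $\E$ and the sum over $T$, legitimate since the sum is over a finite set of tuples) yields the final expression.

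The only step requiring any care is the swap of expectation and the sum over tuples, but since the set of tuples $T$ is finite (determined by two distinct left vertices and a partition of a subset of $R$ consistent with neighborhoods in $G$), linearity applies without integrability concerns. No step here should be the main obstacle; all the substance already resides in Lemma~\ref{lm: mapping errors to tuples}, and this corollary is really a bookkeeping statement that rewrites the expected cost in a form amenable to the LP-duality argument to follow.
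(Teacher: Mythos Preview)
Your proposal is correct and matches the paper's intent: the paper itself gives no explicit proof of this corollary beyond ``From Lemma~\ref{lm: mapping errors to tuples} we get the following,'' and your three-step unpacking is exactly the routine verification one would write out. The only slightly awkward phrasing is ``no other pair colored $T$ is erroneous,'' which is not what you need (some colored pairs may indeed be non-erroneous); but your parenthetical already captures the correct point, namely that $\cost(T)$ counts only the erroneous pairs among those colored $T$, so by Lemma~\ref{lm: mapping errors to tuples} summing $\cost(T)$ over $T$ counts each erroneous pair exactly once.
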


Note: In what follows we use the terms \emph{erroneous pairs} and \emph{violating pairs} or \emph{violation pairs} interchangingly, referring to pairs on which the algorithm incurs a unit of cost.
\subsection{Contradicting Structures}
We now identify bad structures in the graph for
which every output must incur some cost.
In the case of \BCC the minimal such structures are ``bad squares'': 	
A set of four nodes, two on each side, between which there are only three edges.
We make the trivial observation that any clustering $B$ must make at least one mistake on any
such bad square, $s$ (we think of $s$ as the set of $4$ pairs connecting its two left nodes and two right nodes).
Any clustering solution's violating pair set must  hit these squares.
Let $S$ denote the set of all bad squares in the input graph $G$.

It is not enough to concentrate on squares in our analysis.  Indeed, at an $\ell_2$-sub-phase, decisions are made
based on the intersection pattern of the current neighborhoods of $\ell_2$ and $\ell_1$ - a possibly unbounded structure.
The \emph{tuples} now come in handy.

Consider tuple  $T=(\ell^T_1,\ell^T_2,R^T_{1},R^T_{1,2},R^T_2)$
for which $|R^{T}_{1,2}| > 0 $ and $|R^{T}_{2}  | > 0$ .
Notice that for every selection of $r_2\in R^T_2$,
and $r_{1,2}\in R^T_{1,2}$ the tuple contains the
bad square induced by $\{\ell_1, r_2,\ell_2 , r_{1,2}\}$.  Note that there may
also be bad squares $\{\ell_2, r_1,\ell_1,r_{1,2}\}$ for every
$r_1\in R^T_1$ and $r_{1,2}\in R^T_{1,2}$ but these will be
associated to the \emph{conjugate tuple}
$\bar{T}=(\ell^T_2,\ell^T_1,R^T_{2},R^T_{1,2},R^T_1)$.

For each tuple we can write a corresponding linear constraint on the function $\{x(e): e \in L\times R\}$, indicating, as we explained above, the pairs for which
the algorithm errs.
A tuple constraint is the sum of the constraints
of the squares it is associated with, where a constraint for square $s$ is simply defined as $\sum_{e\in s} x(e) \geq 1$.
Since each tuple corresponds to $|R^T_{2}|\cdot |R^T_{1,2}|$
bad squares, we get the following constraint:
\begin{eqnarray*}
\lefteqn{\forall\ T:\ \  \sum_{r_2\in R^T_{2},r_{1,2}\in R^T_{1,2}}
\left(x_{\ell^T_1,r_2} + x_{\ell^T_1,r_{1,2}} +
x_{\ell^T_2,r_2} + x_{\ell^T_2,r_{1,2}} \right) =} \\
&&\sum_{r_2\in R^T_{2}}{|R^T_{1,2}|\cdot(x_{\ell^T_1,r_2} + x_{\ell^T_2,r_2})}
+ \sum_{r_{1,2}\in R^T_{1,2}}{|R^T_{2}|\cdot(x_{\ell^T_1,r_{1,2}}
+ x_{\ell^T_2,r_{1,2}})} \geq |R^T_{2}|\cdot |R^T_{1,2}|
\end{eqnarray*}
The following linear program hence provides a lower bound for the optimal solution:

\begin{eqnarray*}
LP &=& min \sum_{e \in L \times R} x(e) \\
&\mbox{s.t.}\;\;\forall \, T&\frac{1}{|R^T_{2}|}\sum_{r_2\in R^T_{2}}(x_{\ell^T_1,r_2}
+ x_{\ell^T_2,r_2}) +  \frac{1}{|R^T_{1,2}|}\sum_{r_{1,2}\in R^T_{1,2}}
(x_{\ell^T_1,r_{1,2}} + x_{\ell^T_2,r_{1,2}}) \geq 1
\end{eqnarray*}

Notice that all the constraints in this program are sums of square constraints.  This means that the program is equivalent to one in which
only square constraints are present.  Our formulation, however, allows the definition of useful dual variables corresponding to each tuple $T$.
The dual program is as follows:

\begin{eqnarray*}
DP  &=& max \sum_{T} \beta(T) \\
&\mbox{s.t.} \,\, \forall \, (\ell,r) \in E: &\sum_{T :\, \ell_2^T = \ell, r \in R_{2}^T}{\frac{1}{|R_{2}^T|}\beta(T)} +
\hspace{-.5cm}\sum_{T:\, \ell_1^T=\ell,\, r \in R_{1,2}^T}{ \frac{1}{|R_{1,2}^T|}\beta(T)}+
\hspace{-.5cm}\sum_{T:\,\ell_2^T=\ell,\, r \in R_{1,2}^T}{ \frac{1}{|R_{1,2}^T|}\beta(T)} \leq 1\\
&\mbox{and } \forall \,(\ell,r) \not\in E: &\sum_{T:\, \ell_1^T=\ell,\, r \in R_{2}^T}
{\frac{1}{|R_{2}^T|}\beta(T)} \leq 1
\end{eqnarray*}

\subsection{Obtaining the Competitive Analysis}\label{sec: competitive analysis}

We now
relate the
expected cost of the algorithm on each tuple
to a feasible solution for $DP$.  We remind the reader that $q_T$ denotes the probability that a bad event $X_T$ happens to tuple $T$.
\begin{lemma}
Let $\beta(T) = \alpha_T\cdot q_T \cdot \min \{|R^T_{1,2}|,|R^T_{2}|\}$ , when \\
\[
\alpha_T=\min\left\{1 , \frac{|R^T_{1,2}|}{\min\{|R^T_{1,2}|,|R^T_{1}|\}+\min\{|R^T_{1,2}|,|R^T_{2}|\}}\right\}
\]
then $\beta$ is a feasible solution to $DP$. 

In other words,
for every edge $e=(\ell,r)\in E$:
\begin{equation}\label{eq: e in E}
\sum_{T\  s.t\ \ell_2^T = \ell, r \in R_{2}^T}{\frac{1}{|R_{2}^T|} \beta(T)} +
\sum_{T\  s.t\  \ell_1^T=\ell,\, r \in R_{1,2}^T}{ \frac{1}{|R_{1,2}^T|} \beta(T)}+
\sum_{T\  s.t\ \ell_2^T=\ell,\, r \in R_{1,2}^T}{ \frac{1}{|R_{1,2}^T|} \beta(T)} \leq 1.
\end{equation}
And for every pair $e= (\ell,r) \not\in E$:
\begin{equation}\label{eq: e not in E}
\sum_{T\ s.t\ \ell_1^T=\ell,\, r \in R_{2}^T}{\frac{1}{|R_{2}^T|} \beta(T)} \leq 1\ .
\end{equation}
\end{lemma}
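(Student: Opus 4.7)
The plan is to verify both dual constraint families by giving each summand $\beta(T)/|R_\bullet^T|$ a probabilistic interpretation in the execution of \QBCb, and then bounding the resulting sums by reducing them to events with controlled probabilities. The pivotal identity that drives everything is that $\min(|R^T_{1,2}|,|R^T_2|)/|R^T_2|=\min(b/c,1)$ equals the conditional probability that the $\ell_2^T$-sub-phase takes action (either appending or singletonizing), given $X_T$. Consequently $q_T\cdot\min(b,c)/c=\Pr[X_T\text{ and action on }\ell_2^T]$. The companion factor $\min(b,c)/b$ appearing in the other two sums admits a parallel interpretation averaged over common neighbors $r_{1,2}\in R_{1,2}^T$.

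For the non-edge constraint~(\ref{eq: e not in E}), only tuples with $\ell_1^T=\ell$ and $r\in R_2^T$ contribute. I would rewrite the LHS as
\[
\mathbb{E}_\omega\Bigl[\mathbf{1}[\ell\text{ picked as pivot}] \cdot \sum_{\ell_2}\alpha_{T(\omega,\ell_2)}\cdot \min(b,c)/c\Bigr],
\]
where the inner sum ranges over $\ell_2$ still in the graph at $\ell$'s pivot moment with $r\in N(\ell_2)\setminus N(\ell)$, and $T(\omega,\ell_2)$ is the unique tuple determined by $\omega$ and the choice of $\ell_2$. For the edge constraint~(\ref{eq: e in E}), the three summations have distinct roles: (i) tuples with $\ell_2^T=\ell$, $r\in R_2^T$ correspond to the Case-3 coloring event for $(\ell,r)$; (iii) tuples with $\ell_2^T=\ell$, $r\in R_{1,2}^T$ correspond to the Case-2 coloring event for $(\ell,r)$; and (ii) tuples with $\ell_1^T=\ell$, $r\in R_{1,2}^T$ do not color $(\ell,r)$ but admit an expectation-over-execution interpretation centered on the moment $\ell$ is picked as pivot with $r$ in its cluster. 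By Lemma~\ref{lm: mapping errors to tuples}, the $\alpha_T$-free versions of sums (i) and (iii) are each bounded by $1$ individually, and (ii) is similarly bounded using the disjointness of $\ell$'s pivoting moment.

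The main obstacle is that each of the three sums in~(\ref{eq: e in E}) may individually approach $1$, so combining them under a single budget requires the $\alpha_T$ discount to allocate the total probability of $1$ among the three competing contribution types. The explicit formula
\[
\alpha_T=\min\Bigl\{1,\frac{|R^T_{1,2}|}{\min(|R^T_1|,|R^T_{1,2}|)+\min(|R^T_{1,2}|,|R^T_2|)}\Bigr\}
\]
is tailored so that, in every regime of $a=|R^T_1|$, $b=|R^T_{1,2}|$, $c=|R^T_2|$, the weighted combination of the three contribution types does not exceed $1$. I expect the verification to proceed by case analysis on the relative magnitudes of $(a,b,c)$: in the balanced regime $b\le \min(a,c)$ we have $\alpha_T=\tfrac{1}{2}$, halving each contribution; in the extremal regime $b\ge a+c$, $\alpha_T=1$ and the algebra closes without discount. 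Across the intermediate regimes, $\alpha_T$ interpolates between the competing contributions via their probabilistic meanings. Establishing that this particular $\alpha_T$ balances the three sums in all configurations — and is compatible with the randomness of the pivot order — is the core technical work of the lemma.
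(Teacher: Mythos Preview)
Your proposal is missing the central device of the paper's proof: the conjugate tuple $\bar T$ and the symmetry $q_T=q_{\bar T}$. The paper does not bound the three sums of~(\ref{eq: e in E}) separately and then combine them via a case analysis on $(a,b,c)$; instead it observes that the map $T\mapsto\bar T$ puts the second and third sums in bijection, so their combined contribution is
\[
\sum_{T:\,\ell_1^T=\ell,\,r\in R_{1,2}^T}\frac{\beta(T)+\beta(\bar T)}{|R_{1,2}^T|}
\;\le\;
\sum_{T:\,\ell_1^T=\ell,\,r\in R_{1,2}^T} q_T
\;=\;
\sum_{\bar T:\,\ell_2^{\bar T}=\ell,\,r\in R_{1,2}^{\bar T}} \Pr[X_{e,\bar T}],
\]
where the first inequality is exactly what the formula for $\alpha_T$ is designed to give (note $\alpha_T=\alpha_{\bar T}$ and $\alpha_T(\min(b,a)+\min(b,c))/b\le1$ holds by definition, no case split needed), and the last equality uses that if $X_{\bar T}$ occurs with $r\in R_{1,2}^{\bar T}$ then $(\ell,r)$ is colored $\bar T$ with probability~$1$. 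The first sum of~(\ref{eq: e in E}) is bounded termwise by $\Pr[X_{e,T}]$ using only $\alpha_T\le1$. All these $X_{e,T}$ are disjoint coloring events for the single pair $e$, so Lemma~\ref{lm: mapping errors to tuples} gives the total $\le1$. The non-edge case~(\ref{eq: e not in E}) is handled identically: pass from $T$ with $\ell_1^T=\ell$ to $\bar T$ with $\ell_2^{\bar T}=\ell$, $r\in R_1^{\bar T}$, and use the first coloring rule.

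Your route has a concrete failure point. You claim sum~(ii), where $\ell_1^T=\ell$, is ``similarly bounded using the disjointness of $\ell$'s pivoting moment,'' and you rewrite the non-edge LHS as $\mathbb{E}_\omega[\mathbf{1}[\ell\text{ pivot}]\sum_{\ell_2}\cdots]$. But the events $X_T$ for tuples sharing $\ell_1^T=\ell$ are \emph{not} disjoint: when $\ell$ is pivoted, every surviving $\ell_2$ simultaneously realizes its own $X_T$, so $\sum_{T:\ell_1^T=\ell}q_T$ (and your inner $\sum_{\ell_2}$) can be as large as $|L|-1$. There is no way to bound sum~(ii) or the non-edge sum by~$1$ without first switching to the conjugate, which converts the $\ell_1^T=\ell$ condition into $\ell_2^{\bar T}=\ell$ and thereby into genuinely disjoint coloring events for $e$. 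Your proposed per-tuple case analysis on $(a,b,c)$ cannot repair this, because the constraint is a sum over many tuples with different parameters; the role of $\alpha_T$ in the paper is a single algebraic identity enabling the pairing, not a regime-by-regime balancing act.
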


\begin{proof}

First, notice that given a pair $e=(\ell,r)\in E$ each
tuple $T$ can appear at most in one of the sums
in the LHS of (\ref{eq: e in E}). 
Denote by $X_{e,T}$ the event that the edge $e$ is colored with color $T$.  
We distinguish between two cases.
\begin{enumerate}
\item Consider $T$ appearing in the first sum of the LHS of (\ref{eq: e in E}),
    meaning that $\ell^T_2=\ell$ and  $r\in R^T_2$. We distinguish between two sub-cases.
    \begin{itemize}
    \item If  $|R^T_{1,2}|\geq |R^T_1|$, $e$ is colored with color $T$
        if $\ell^T_2$ joined the cluster of $\ell^T_1$.
        This happens, conditioned on $X_T$, with probability $\Pr[X_{e,T}|X_T]=
        \min \left\{\frac{ |R^T_{1,2}| }{|R^T_{2}|} , 1\right\}$,
    \item if $|R^T_{1,2}|< |R^T_1|$ we color $e$ with color $T$
        if $\ell_2$ was isolated, which happens with probability
        $\Pr[X_{e,T}|X_T]=\min \{\frac{ |R_{1,2}| }{|R_{2}|} , 1\}$ as well.
    \end{itemize}
    Thus, $T$ contributes the following expression to the sum:
    \begin{eqnarray*}
    {\frac{1}{|R^T_{2}|}\beta(T)} = \frac{1}{|R^T_{2}|}
         \alpha_T\cdot q_T\cdot \min \{|R^T_{1,2}|,|R^T_{2}|\} &\leq&
    q_T\cdot \min \left \{\frac {|R^T_{1,2}|} {|R^T_{2}|} , 1\right \}\\ &=&
    \Pr [X_T] \Pr[X_{e,T}|X_T]  =  \Pr[X_{e,T}].
    \end{eqnarray*}
\item
 $T$ contributes to the second or third sum in the LHS of (\ref{eq: e in E}).
    By definition of the conjugate $\bar T$, the following holds:
    \begin{equation}\label{monkey}
    \sum_{T\  s.t\  \ell_1^T=\ell,\, r \in R_{1,2}^T}{ \frac{1}{|R_{1,2}^T|} \beta(T)}+
\hspace{-.4cm}\sum_{T\  s.t\ \ell_2^T=\ell,\, r \in R_{1,2}^T}{ \frac{1}{|R_{1,2}^T|} \beta(T)}=
\hspace{-.4cm}\sum_{T\  s.t\  \ell_1^T=\ell,\, r \in R_{1,2}^T}{ \frac{1}{|R_{1,2}^T|} \left(\beta(T)+\beta(\bar{T})\right)}.
    \end{equation}
    Therefore it is sufficient to bound the contribution of each $T$
    to the  RHS of (\ref{monkey}).  We may therefore focus on tuples $T$ for which
    if $\ell=\ell_1^T$ and $r \in R_{1,2}^T$.
    Consider a moment in the algorithm's execution in which both $\ell_1^T$ and $\ell_2^T$
    were still present in the graph, $R^T_1= N(\ell^T_1)\setminus N(\ell^T_2)$,
    $R^T_{1,2}= N(\ell^T_1)\cap N(\ell^T_2)$,
    $R^T_{2}= N(\ell^T_2)\setminus N(\ell^T_1)$
    and one of  $\ell^T_1, \ell^T_2$ was chosen to be a left center.\footnote{We use the definition of $N(\cdot)$ which depends on the ``current'' state of the graph at that moment, after possibly removing previously created clusters.}
    Either one of $\ell_1^T$ and $\ell_2^T$ had the same
    probability to be chosen.  In other words:
$$ \Pr[X_T | X_T \cup X_{\bar T}] = \Pr[X_{\bar T} | X_T \cup X_{\bar T}]\ ,$$
and hence, $q_T = q_{\bar T}$.
    Further, notice that $e=(\ell,r)$ is never colored with color $T$,
    and if event $X_{\bar T}$ happens then $e$ is colored with color $\bar T$
    with probability 1.
    Therefore:
    \begin{eqnarray*}
   \lefteqn{\frac{1}{|R_{1,2}^T|} \left(\beta(T)+ \beta(\bar{T})\right)  }\\
    &=&  \frac{1}{|R_{1,2}^T|}\cdot q_T \cdot \min\left\{1 , \frac{|R^T_{1,2}|}{\min\{|R^T_{1,2}|,|R^T_{1}|\}+\min\{|R^T_{1,2}|,|R^T_{2}|\}}\right\}\\
    &&\hspace{7cm} \cdot\left(\min \{|R^T_{1,2}|,|R^T_{2}|\}+ \min \{|R^{\bT}_{1,2}|,|R^{\bT}_{2}|\}\right)\\
    &\leq& q_T = q_{\bar T}= \Pr[X_{\bar T}] = \Pr[X_{e, \bar T}]+\Pr[X_{e, T}] .
    \end{eqnarray*}
\end{enumerate}
Summing this all together, for every edge $e\in E$:
$$\sum_{T\  s.t\ \ell_2^T = \ell, r \in R_{2}^T}
{\frac{1}{|R_{2}^T|} \beta(T)} +
\sum_{T\  s.t\  \ell_1^T=\ell,\, r \in R_{1,2}^T}
{ \frac{1}{|R_{1,2}^T|} \beta(T)}+
\sum_{T\  s.t\ \ell_2^T=\ell,\, r \in R_{1,2}^T}
{ \frac{1}{|R_{1,2}^T|} \beta(T)}
\leq \sum_T{ Pr[X_{e,T}]}.
$$
By the first part of Lemma~\ref{lm: mapping errors to tuples} we know
that $\sum_T{ Pr[X_{e,T}]}$ is exactly
the probability of the edge $e$ to be colored (the sum is over probabilities of disjoint events),
therefore it is at most $1$, as required to satisfy~(\ref{eq: e in E}).

Now consider  a pair $e=(\ell,r) \not\in E $.  A tuple $T$ contributes
to (\ref{eq: e not in E}) if
$\ell^T_1=\ell$ and $r\in R^T_{2}$.
Since, as before, $q_T=q_{\bT}$ and since $\Pr[X_{e,\bar{T}}|X_{\bar T}] =1$
(this follows from the first coloring rule described in the beginning of
Section~\ref{sec:analysis}) we obtain the following:
\begin{eqnarray*}
\sum_{T\ s.t\ \ell_1^T=\ell,\, r \in R_{2}^T}{\frac{1}{|R_{2}^T|} \beta(T)}
&=& \sum_{T\ s.t\ \ell_1^T=\ell,\, r \in R_{2}^T}
{\frac{1}{|R_{2}^T|}\cdot \alpha_T \cdot q_T \cdot \min \{|R^T_{1,2}|,|R^T_{2}|\}}\\
 &\le & \sum_{T\ s.t\ \ell_1^T=\ell,\, r \in R_{2}^T}{ q_T } \ = \
    \sum_{\bar{T}\ s.t\ \ell_2^{\bar{T}}=\ell,\, r \in R_{1}^{\bar{T}}}{ q_{\bar{T}} }\\
 &= & \sum_{\bar{T}\ s.t\ \ell_2^{\bar{T}}=\ell,\, r \in R_{1}^{\bar{T}}}  { \Pr [X_{\bT}] }
 =      \sum_{\bar{T}\ s.t\ \ell_2^{\bar{T}}=\ell,\, r \in R_{1}^{\bar{T}}}  { \Pr [X_{e,\bT}] }  \\  
 &=& \sum_{T}{ \Pr[X_{e,T}]}.
\end{eqnarray*}
From the same reason as before, this is at most $1$, as required for (\ref{eq: e not in E}).
\end{proof}

After presenting the feasible solution to our
dual program, we have left to prove that the expected
cost of \QBCb is at most 4 times the  DP value of this solution.
For this we need the following:

\begin{lemma}\label{lm: tuple cost}
For any tuple $T$,
$$ q_T\cdot\E[\cost(T)|X_T] + q_{\bar T} \cdot \E[\cost({\bar{T}})|X_{\bar T}] \leq
4\cdot\left(\beta(T) +\beta(\bar{T})\right) .$$
\end{lemma}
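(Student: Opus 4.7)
The plan is to introduce the shorthand $a = |R_1^T|$, $b = |R_{1,2}^T|$, $c = |R_2^T|$ and to exploit the identity $q_T = q_{\bar T}$ established in the proof of the previous lemma together with the symmetry $\alpha_T = \alpha_{\bar T}$ (immediate from the defining expression, which is invariant under swapping $R_1^T$ with $R_2^T$). After dividing by $q_T$ the claim becomes
\[
\E[\cost(T) \mid X_T] + \E[\cost(\bar T) \mid X_{\bar T}] \;\leq\; 4\alpha_T \bigl(\min(b,c) + \min(b,a)\bigr).
\]

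First I would derive a closed form for $\E[\cost(T) \mid X_T]$ by conditioning on the three mutually exclusive outcomes of the $\ell_2^T$-sub-phase. According to the algorithm, when $b \geq a$ the pivot attaches $\ell_2^T$ to $C$ with probability $\min(b/c,1)$, and when $b < a$ it makes $\ell_2^T$ a singleton with the same probability; otherwise $\ell_2^T$ stays in the graph. For each outcome I would identify which of the colored pairs is erroneous in the final clustering: the pairs in $\{\ell_2^T\}\times R_1^T$ err iff $\ell_2^T$ joins $C$, the pairs in $\{\ell_2^T\}\times R_{1,2}^T$ err iff $\ell_2^T$ does not join $C$, and the pairs in $\{\ell_2^T\}\times R_2^T$—which are colored only when some non-trivial action is taken—always err once colored, since $R_2^T$ is disjoint from $N(\ell_1^T)$ and hence separated from $\ell_2^T$ in both the ``join'' and ``singleton'' outcomes. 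Combining these observations yields a closed form for $\E[\cost(T)\mid X_T]$ that depends on whether $b \geq a$ and whether $b \geq c$, together with a symmetric formula for $\bar T$ obtained by interchanging $a$ and $c$.

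Finally I would verify the inequality by case analysis on the joint ordering of $a,b,c$, computing $\beta(T)+\beta(\bar T)$ explicitly in each regime. In the regime $b \geq \max(a,c)$ both conditional expectations equal $a+c$, while $\beta(T)+\beta(\bar T) = q_T \min(b, a+c)$, so the desired factor-of-$4$ bound reduces to the elementary fact that $b \geq (a+c)/2$. In the opposite extreme $b < \min(a,c)$, both conditional expectations equal exactly $2b$ and $\alpha_T = 1/2$, giving $\beta(T)+\beta(\bar T) = q_T b$ and equality at factor $4$—this is the tight case that forces the approximation ratio. I expect the main technical obstacle to be the intermediate asymmetric regime $a \leq b < c$ (and its mirror $c \leq b < a$), where the sub-phase probability $\min(b/c,1)$ is a genuine fraction and the two conditional expectations are no longer symmetric: here one finds $\E[\cost(T)\mid X_T] \leq 2b$ using $b \geq a$, while $\E[\cost(\bar T)\mid X_{\bar T}] = a + b$, so the sum is at most $3b + a \leq 4b$, matching $4(\beta(T)+\beta(\bar T)) = 4q_T b$ after substituting $\alpha_T = b/(a+b)$. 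Once the conditional expectations from the previous step are in closed form, each case reduces to a simple inequality among $a$, $b$, and $c$; the only real subtlety is the bookkeeping that identifies which colored pairs are errors under each sub-phase outcome.
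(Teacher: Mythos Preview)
Your proposal is correct and follows essentially the same route as the paper: the same three-way case split on the relative ordering of $|R_1^T|$, $|R_{1,2}^T|$, $|R_2^T|$, the same computation of $\E[\cost(T)\mid X_T]$ by conditioning on the join/singleton/do-nothing outcome of the $\ell_2$-sub-phase, and the same elementary inequalities ($a+c\le 2b$, $a\le b$, etc.) to close each case. Your identification of which colored pairs are erroneous under each outcome is exactly the bookkeeping the paper carries out, and your observation that the regime $b<\min(a,c)$ is tight at factor $4$ is also noted implicitly in the paper's Case~3, where equality holds.
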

\begin{proof}
We consider three cases, according to the structure of $T$.

\noindent
{\bf Case 1.} $|R^T_1|\leq |R^T_{1,2}|,\ |R^T_{2}|\leq |R^T_{1,2}|$ (equivalently
            $|R^{\bar{T}}_1|\le |R^{\bar{T}}_{1,2}| ,\ |R^{\bar{T}}_2|\le |R^{\bar{T}}_{1,2}|$) :\\
        For this case, $\alpha_T=\alpha_{\bT}=\min\left\{1,\frac{|R^T_{1,2}|}{|R^T_{1}|+|R^T_2|}\right\} $,
         and we have (recall that $q_T = q_{\bar T}$)
         \begin{eqnarray*}
        \beta(T) +\beta(\bar{T}) &=&  \alpha_T \cdot q_T\cdot \left(\min \{|R^T_{1,2}|,|R^T_{2}|\}
        + \min \{|R^T_{1,2}|,|R^T_1|\}  \right)\\
        &=& q_T\cdot \min\{(|R^T_{2}|+|R^T_1|),|R^T_{1,2}|\}
        \ge \frac{1}{2}  \cdot q_T\cdot (|R^T_{2}|+|R^T_1|).
         \end{eqnarray*}
        Since $|R^T_1|\leq |R^T_{1,2}|$, if event $X_T$ happens \QBCb
        adds $\ell^T_2$ to $\ell^T_1$'s cluster with probability
        $\min \left\{\frac {|R^T_{1,2}|}{|R^T_{2}|} , 1\right\}= 1$. Therefore
        the pairs colored with color $T$ that \QBCb violates  are
        all the edges from $\ell^T_2$ to $R^T_{2}$ and
        all the non-edges from $\ell^T_2$ to $R^T_1$, namely,
        $|R^T_{2}|+|R^T_1|$ edges. The same happens in the event $X_{\bar{T}}$
        as the conditions on $|R^{\bar{T}}_1|$, $|R^{\bar{T}}_{1,2}|$, and
        $|R^{\bar{T}}_2|$ are the same,
        and since $|R^{\bar{T}}_{2}|+|R^{\bar{T}}_1|= |R^T_1|+|R^T_{2}|$.
        Thus,
        $$
        q_T \cdot \left( \E[\cost(T|X_T)]+ \E[\cost({\bar{T}}|X_{\bar T})]\right)
        = q_T \left( 2\left( |R^T_{2}|+|R^T_1|\right) \right)
        \le 4\cdot \left(\beta(T)+\beta({\bar{T}}) \right).
        $$

\noindent
{\bf Case 2.}  $|R^T_1|< |R^T_{1,2}| < |R^T_{2}|$  (equivalently
            $|R^{\bar{T}}_1|> |R^{\bar{T}}_{1,2}| > |R^{\bar{T}}_{2}|$) :\\
            Here $\alpha_T=\alpha_{\bT}=\min\left\{1,\frac{|R^T_{1,2}|}{|R^T_{1}|+|R^T_{1,2}|}\right\} $,
            therefore,
        \begin{eqnarray*}
        \beta(T) +\beta(\bar{T}) &=& \alpha_T\cdot q_T \cdot
        \left(\min \{|R^T_{1,2}|,|R^T_{2}|\} +
        \min \{|R^T_{1,2}|,|R^T_1|\}  \right)\\
       &=&  q_T \cdot \min\{ |R^T_{1,2}|+|R^T_1|, |R^T_{1,2}|\}
       =  q_T \cdot |R^T_{1,2}|.
       \end{eqnarray*}
        As $|R^T_1|\leq |R^T_{1,2}|$, if event $X_T$ happens \QBCb
        adds $\ell^T_2$ to $\ell^T_1$ cluster with probability
        $\min \left\{\frac {|R^T_{1,2}|}{|R^T_{2}|} , 1\right\}
        = \frac {|R^T_{1,2}|}{|R^T_{2}|}$.
        Therefore with probability $\frac {|R^T_{1,2}|}{|R^T_{2}|}$
        the pairs colored by color  $T$ that \QBCb violate are all
        the edges from $\ell^T_2$ to $R^T_{2}$ and all the
        non-edges from $\ell^T_2$ to $R^T_1$, and with
        probability $\left(1-\frac {|R_{1,2}|}{|R_{2}|}\right)$
        \QBCb violates all the edges from $\ell^T_2$ to $R^T_{1,2}$.
        Thus,
        \begin{eqnarray*}
        \E[\cost(T)|X_T]&=&\frac {|R^T_{1,2}|}{|R^T_{2}|} \left(|R^T_{2}|+|R^T_1|\right) +
                      \left(1-\frac {|R^T_{1,2}|}{|R^T_{2}|}\right)|R^T_{1,2}|\\
                 &=& 2\cdot |R^T_{1,2}| + \frac {|R^T_{1,2}|\cdot |R^T_1|-|R^T_{1,2}|^2}
                 {|R^T_{2}|}\ \leq\  2\cdot|R^T_{1,2}|.
        \end{eqnarray*}
        If the event $X_{\bar{T}}$ happens, as
        $|R^{\bar{T}}_1|>|R^{\bar{T}}_{1,2}|$
        and $\min \left\{\frac {R^{\bar{T}}_{1,2}}{ R^{\bar{T}}_2} , 1\right\}= 1$,
        \QBCb chooses to isolate $\ell^{\bar{T}}_2$ ($=\ell^T_1$)
        with probability $1$ and the number of pairs colored with color ${\bar{T}}$ that are consequently
        violated are  $|R^{\bar{T}}_2|+|R^{\bar{T}}_{1,2}|=|R^{T}_1|+|R^{T}_{1,2}|$ .
        Thus,
        \begin{eqnarray*}
        q_T\cdot\left(\E[\cost(T)|X_T])+ \E[\cost(\bar T) | X_{\bar{T}})]\right)
        &\leq& q_T \cdot ( 2|R^{T}_{1,2}|+ |R^{T}_1|+|R^{T}_{1,2}|)\\
        &<& 4 \cdot q_T \cdot |R^T_{1,2}|
        = 4 \cdot \left(\beta(T)+\beta(\bar{T}) \right)\ .
        \end{eqnarray*}

\noindent
{\bf Case 3.} $|R^T_{1,2}|< |R^T_1|, |R^T_{1,2}|<|R^T_{2}|$ (equivalently,
        $|R^{\bar{T}}_{1,2}|< |R^{\bar{T}}_2|, |R^{\bar{T}}_{1,2}|<|R^{\bar{T}}_{1}|$):\\
        Here, $\alpha_T=\alpha_{\bT} = \frac 1 2$ , thus,
        $$
        \beta(T) +\beta(\bar{T})= \frac {1}{2}\cdot q_T
            \cdot \left(\min \{|R^T_{1,2}|,|R^T_{2}|\}
                + \min \{|R^T_{1,2}|,|R^T_1|\}  \right)
            = q_T\cdot |R^T_{1,2}|\ .
        $$
        Conditioned on event $X_T$, as $|R^T_1| > |R^T_{1,2}|$, \QBCb chooses
        to isolate $\ell_2$ with probability
        $\min \left\{\frac {|R^T_{1,2}|}{|R^T_{2}|} , 1\right\}
        = \frac {|R^T_{1,2}|}{|R^T_{2}|}$.
        Therefore with probability $\frac {|R^T_{1,2}|}{|R^T_{2}|}$ \QBCb
        colors $|R^T_{2}|+|R^T_{1,2}|$ pairs with color $T$ (and violated them all).
        With probability $\left(1-\frac {|R^T_{1,2}|}{|R^T_{2}|}\right)$, \QBCb
        colors $|R^T_{1,2}|$ pairs with color $T$ (and violated them all). We conclude that
        $$
        \E[\cost(T)|X_t] =\frac {|R^T_{1,2}|}{|R^T_{2}|} (|R^T_{2}|+|R^T_{1,2}|)+
        \left(1-\frac {|R^T_{1,2}|}{|R^T_{2}|}\right) |R^T_{1,2}| = 2 |R^T_{1,2}|\ .
        $$
        Similarly, for event $X_{\bar{T}}$, as $|R^{\bar{T}}_1|>|R^{\bar{T}}_{1,2}|$
        and  $\min \left\{\frac {|R^{\bar{T}}_{1,2}|} {|R^{\bar{T}}_2|} , 1\right\}
        = \frac {|R^T_{1,2}|} {|R^T_1|}$,
        \QBCb isolates $\ell_1$ with probability
        $\frac {|R^T_{1,2}|} {|R^T_1|}$
        therefore colors $|R^{\bT}_2|+|R^{\bT}_{1,2}|$
         pairs  with color $\bar{T}$ (and violated them all).
        With probability $(1-\frac {|R^T_{1,2}|} {|R^T_1|})$
        \QBCb colors $|R^{\bT}_{1,2}|$ pairs with color ${\bT}$ (and violates them all).
        Thus,
        $$
        \E[\cost(\bar{T})|X_{\bar T}]=\frac {|R^T_{1,2}|}{|R^T_1|} (|R^T_1|+|R^T_{1,2}|)
        + \left(1-\frac {|R^T_{1,2}|}{|R^T_1|}\right)|R^T_{1,2}| = 2 |R^T_{1,2}| .
        $$
        And therefore
        $$q_T\cdot\left(\E[\cost(T)|X_t]+ \E[\cost(\bar T) | X_{\bar{T}}]\right)
         = 4\cdot q_T \cdot |R^T_{1,2}|
        = 4\cdot (\beta(T) +\beta(\bar{T})) \ .$$

\end{proof}

\noindent
By Corollary~\ref{obs: mapping QBCb to tuples}
\begin{eqnarray*}
 E[\QBCb] &=& \sum_T {\Pr[X_T]\cdot \E[\cost(T)|X_T]} \\ &=&\frac{1}{2} \sum_{T}{\left(\Pr[X_T] \cdot \E[\cost(T)|X_T]+\Pr[X_{\bT}]\cdot \E[\cost(\bar{T})|X_{\bar T}]\right)}\ .
\end{eqnarray*}
By Lemma~\ref{lm: tuple cost} the above RHS is at most
$2\cdot \sum_{T} ({\beta(T) +\beta(\bar{T})}) = 4\cdot\sum_{T} {\beta(T)}.$
Therefore by the weak duality theorem we conclude that
$$\mathbb{E}[\QBCb]\le 4\cdot\sum_{T} {\beta(T)}\leq 4 \cdot OPT .$$
This proves our main result Theorem~\ref{thm:main}.

\section{Future Work}\label{sec:future}

Improving the approximation factor as well as derandomizing the algorithm (in the lines of \cite{ZHJW07}, or using other techniques) are interesting questions.
One direction that seems promising is to devise an LP rounding algorithm using a variation of \QBCb (in the lines of the LP-based algorithms in \cite{ACN08}).

\appendix
\section{A Counter Example for a Previously Claimed Result}\label{section:counter}
In \cite{GHKZ08} the authors claim to design and analyze a $4$-approximation algorithm for \BCC.  Its analysis is based on
 bad squares (and not unbounded structures, as done in our analysis).
Their algorithm is as follows: First, choose a pivot node uniformly at randomly from the left side, and cluster it with all its neighbors. Then, for each node on the left,
if it has a neighbor in the newly created cluster, append it with probability $1/2$. An exception is reserved for nodes whose
neighbor list is identical  that of the pivot, in which case these nodes join with probability $1$. Remove the clustered nodes and repeat until
no nodes are left in the graph.

Unfortunately,  there is an example demonstrating that the algorithm has an unbounded approximation ratio.
Consider a bipartite graph on $2n$ nodes, $\ell_{1,\ldots,n}$ on the left and $r_{1,\ldots,n}$ on the right.
Let each node $\ell_{i}$ on the left be connected to all other nodes on the right except for $r_{i}$.
The optimal clustering of this graph connects all $\ell_i$ and $r_i$ nodes and thus has cost $OPT=n$.
In the above algorithm, however, the first cluster created will include all but one of the nodes on the
right and roughly half the left ones. This already incurs a cost of $\Omega(n^2)$ which is a factor $n$ worse than the best possible.

\noindent
As a side note, the authors of this abstract have also tried to design an algorithm based on an analysis involving squares only, to no avail.

\end{spacing}
\end{document}